\newtheorem{theorem}{Theorem}
\newtheorem{prop}[theorem]{Proposition}
\newcommand\bib[1]{\bibitem[#1]{#1}}
\newcommand\1{{\bf 1}}
\renewcommand\a{\alpha}
\renewcommand\b{\beta}
\newcommand\Cc{{\let\mathcal\mathscr\mathcal C}}
\newcommand\Dc{{\let\mathcal\mathscr\mathcal D}}
\newcommand\Hh{{\let\mathcal\mathscr\mathcal H}}
\newcommand\Pp{{\let\mathcal\mathscr\mathcal P}}
\renewcommand\d{\delta}
\newcommand\D{{\mathcal D}}
\newcommand\e{\varepsilon}
\newcommand\g{\mathfrak{g}}
\renewcommand\l{\lambda}
\newcommand\oo{\omega}
\newcommand\op[1]{\mathop{\rm #1}\nolimits}
\newcommand\ot{\otimes}
\newcommand\p{\partial}
\newcommand\R{{\mathbb R}}
\renewcommand\t{\tau}
\newcommand\vp{\varphi}
\newcommand\we{\wedge}
\newcommand\z{\sigma}
\begin{document}

 \title[SDiff(2) and Pleba\'{n}ski II]{SDiff(2) and uniqueness of\\ the Pleba\'{n}ski equation}
 \author{Boris Kruglikov \& Oleg Morozov}
 \date{}
 \address{Institute of Mathematics and Statistics, University of Troms\o, Troms\o\ 90-37, Norway. \quad
E-mails: boris.kruglikov@uit.no; oleg.morozov@uit.no.}
 \keywords{Lie pseudogroup, symmetries, differential invariants, Pleba\'{n}ski heavenly equation,
Lie algebra cohomology, Gelfand-Fuks cocycle, Hamiltonian vector fields, Poisson bracket.}

 \vspace{-14.5pt}
 \begin{abstract}
The group of area preserving diffeomorphisms showed importance in the problems of self-dual gravity
and integrability theory. We discuss how representations of this infinite-dimensional Lie group can arise in
mathematical physics from pure local considerations. Then using Lie algebra extensions and cohomology we derive 
the second Pleba\'{n}ski equation and its geometry. We do not use K\"ahler or other additional structures but obtain 
the equation solely from the geometry of area preserving transformations group. We conclude that the Pleba\'{n}ski 
equation is Lie remarkable.
 \end{abstract}

 \maketitle

\section*{Introduction}

Consider a two-dimensional manifold $M$ equipped with an area 2-form.
This structure can be uniformized according to the genus of $M$ and its total area.
In this paper we would like to assume the simplest possible topology
($M$ is $\R^2$ or $\mathbb{S}^2$), concentrating on the geometry of the
group $\op{SDiff}(2)$ of area preserving transformations.

This group can be seen already in the original Pleba\'{n}ski work \cite{P} on the
equations of gravity where he obtained the so-called second heavenly equation
 \begin{equation}\label{PbII}
u_{ty}-u_{xz}+u_{xx}u_{yy}-u_{xy}^2=0,
 \end{equation}
and it played an important role in the subsequent development of the corresponding integrable hierarchies \cite{St,T}.
In this paper we show how this group arises in relation to the second Pleba\'{n}ski equation from a purely local construction; 
for nonlocal structures such as the Lax pair and the recursion operator see \cite{DM,MNS}. The group $\op{SDiff}(2)$ is known 
\cite{MNS,BP} to be related to the classical symmetries of (\ref{PbII}); we shall make this relation two-sided.

For the infinite-dimensional Lie group $\op{SDiff}(2)$ the corresponding Lie algebra $\Dc_0(M)$
consists of divergence free vector fields, which due to trivial topology coincide with
Hamiltonian vector fields. This leads to the classical Lie algebras isomorphism
$(\Dc_0(M),[,])\simeq (C^\infty(M)/\R,\{,\})$, where we use the Lie bracket (commutator) to the left and the Poisson bracket to the right.

The geometry of the Poisson algebra $\Pp=(C^\infty(M),\{,\})$ is central in our paper.
Infinite-dimensional groups are known as an important tool to generate Hamiltonian PDEs
via Euler-Arnold, Gelfand-Dikij and other methods \cite{AK}.
Our strategy is to search for differential invariants among simplest
possible representations of the Lie algebra sheaf of vector fields. Such invariants determine
differential equations that come naturally equipped with a large symmetry algebra.

Since there are no differential invariants in 2 dimensions for $\op{SDiff}(2)$,
we have to extend the Lie algebra or its action.
It turns out that we need to do both, and that the natural extensions yields the action of 4 copies
of $\Pp$ (this will be shown to have the graded structure) on the space of functions in 4 dimensions.
Then the fundamental invariant is the left-hand side of (\ref{PbII}).

To see this we compute the natural differential operators related to the Poisson algebra and
calculate the first cohomology of the Lie algebra $\Pp$ with values in its representations. On this way we
discover the $\op{SDiff}(2)$ analog of the Gelfand-Fuks cocycle, which is central in our method of deriving
the Pleba\'{n}ski equation (\ref{PbII}).

We further explore the symmetry structure of the second Pleba\'{n}ski equation and demonstrate that
it essentially coincides with our extended graded Lie algebra. This makes (\ref{PbII}) the so-called
Lie remarkable equation.

At the end of the paper we shortly discuss the first Pleba\'{n}ski equation, the situation with which
is happened to be quite similar (the integrability properties for both equations are known to be equivalent,
but it is not obvious that the local symmetries and invariants structures are equally reach
since the relation between the equations is non-local).

\medskip

{\textsc{Acknowledgement.}} We thank V.\, Lychagin for useful discussions.

\section{$\op{SDiff}(2)$ and its cohomology}\label{S1}

Let $\Pp$ be the Poisson algebra $(C^\infty(M),\{,\})$, and let $\Dc(M)$ denote the Lie algebra of all
vector fields on $M$. We denote its subalgebra consisting of Hamiltonian fields by $\Dc_0(M)$.

The map $h\mapsto X_h$ is an epimorphism of Lie algebras $\Pp\to\Dc_0(M)$ whose kernel is equal to the
center $\R\subset\Pp$ (on $\mathbb{S}^2$ we can restrict to the space of functions with zero mean).

Since $\op{SDiff}(2)$ acts transitively on $M$, and has open dense orbits in the space of functions on $M$,
to find non-trivial (absolute) invariants we have to consider an extension of the tautological
representation $\Dc_0(M)\subset\op{Der}[C^\infty(M)]$ to a space of bigger dimensions.

Let us start with 1-dimensional extension, i.e. we want to find a homomorphism $\rho$ of $\Dc_0(M)$
to $\Dc(M\times\R)$ such that the vector fields in the image are projectible along $\R$ to
our Hamiltonian fields. In other words, if $\pi\colon M\times\R\to M$ is the natural projection,
then $\pi_*\circ\rho=\1$.

These fields have the form $X_h+\psi(h)\p_u$, where $\p_u$ is the unit vector field along the fiber
coordinate $u$ of $\pi$. The homomorphism condition is equivalent to the claim that $\psi$ is a
1-cocycle on $\Pp$ with values in $C^\infty(M)$. The action on functions is given by
$(h,f)\mapsto X_h(f)=\{h,f\}$ and so is the adjoint representation in $\Pp$.
Thus non-trivial extensions are parametrized by the cohomology group $H^1(\Pp,\Pp)$.

Since all our constructions are required to be local, we will restrict to extensions given by
differential operators. This is also guaranteed by our assumption of trivial topology. Thus in what
follows all cocycles are expressed via differential operators.

 \begin{theorem}\label{thm1}
The above group is 1-dimensional: $H^1(\Pp,\Pp)=\R$. In canonical coordinates $(t,z)$ on $M$
such that $\varpi=dt\we dz$ the generator is represented by the 1-cocycle $\z_1(A)=tA_t+zA_z-2A$.
 \end{theorem}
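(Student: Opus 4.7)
The plan is to verify that $\z_1$ is a non-trivial cocycle and then to show every 1-cocycle is cohomologous to a scalar multiple of $\z_1$. Writing $\z_1 = E - 2$ with $E = t\p_t + z\p_z$ the Euler field, the identities $[E, \p_t] = -\p_t$ and $[E, \p_z] = -\p_z$ yield by direct computation the Leibniz-like relation $E\{A, B\} = \{EA, B\} + \{A, EB\} - 2\{A, B\}$, which rearranges to the cocycle condition $\z_1\{A, B\} = \{A, \z_1 B\} - \{B, \z_1 A\}$. The class $[\z_1]$ is non-trivial because $\z_1(1) = -2 \neq 0$, while every coboundary $\delta F(A) = \{A, F\}$ vanishes on the constant function $1$.

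Given an arbitrary cocycle $\psi$ (a differential operator by hypothesis), the cocycle identity with $A = 1$ forces $\psi(1)$ to be a Casimir, hence a real constant $c$. Replacing $\psi$ by $\psi + \tfrac{c}{2}\z_1$, I may assume $\psi(1) = 0$. Testing at $(A, B) = (z, t)$ using $\{z, t\} = -1$, $\{z, f\} = -f_t$, $\{t, f\} = f_z$ then produces $\p_t\psi(t) + \p_z\psi(z) = 0$, which is exactly the integrability condition needed for a $G \in C^\infty(M)$ satisfying $G_t = \psi(z)$, $G_z = -\psi(t)$ (Poincar\'e lemma on simply connected $M$). The coboundary $\delta G$ realises $\delta G(t) = -\psi(t)$, $\delta G(z) = -\psi(z)$, so modulo it I may arrange $\psi(1) = \psi(t) = \psi(z) = 0$.

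With these three vanishings the cocycle identities for $A \in \{t, z\}$ reduce to $[\psi, \p_z] = [\psi, \p_t] = 0$, forcing $\psi$ to be a constant-coefficient differential operator $P(\p_t, \p_z)$ with $P$ vanishing to order $\geq 2$ at the origin. Using the Weyl-algebra identities $[P, t] = \p_{\p_t}P$ and $[P, z] = \p_{\p_z}P$, the cocycle condition with the quadratic Hamiltonian $A = tz$ should collapse to the symbol identity $\p_z\,\p_{\p_z}P = \p_t\,\p_{\p_t}P$, forcing $P$ to be a polynomial in the single product $\p_t\p_z$; the test with $A = \tfrac12 t^2$ should then yield $\p_{\p_t}P = 0$, leaving $P = 0$. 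Hence $\psi$ is a coboundary and $H^1(\Pp, \Pp) = \R\cdot[\z_1]$. The hardest step will be this last collapse: one must carefully track the multiplication--differentiation commutators and interpret the resulting polynomial identities as holding for every test function $B$. Conceptually, they encode $\mathfrak{sl}(2, \R)$-equivariance of $\psi$ under the quadratic Hamiltonians $\tfrac12 t^2$, $\tfrac12 z^2$, $tz$, and this equivariance is what kills the remaining symbol freedom.
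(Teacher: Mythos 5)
Your proof is correct, and it takes a genuinely different route from the paper's. The paper writes the most general differential operator $s(h)=\sum\l_{ij}h_{(i,j)}$ with variable coefficients, substitutes it into the cocycle identity, and reads off an overdetermined system for the $\l_{ij}$ by comparing coefficients of the bidifferential monomials $h_{(i,j)}f_{(k,l)}$; the general solution then splits visibly into the coboundary $\{q,\cdot\}$ plus a multiple of $\z_1$. You instead normalize first: subtract a multiple of $\z_1$ to kill $\psi(1)$, then a coboundary $\delta G$ (whose existence is exactly the integrability condition $\p_t\psi(t)+\p_z\psi(z)=0$ extracted from the pair $(z,t)$) to kill $\psi(t),\psi(z)$, after which equivariance under $t$ and $z$ forces constant coefficients and equivariance under the quadratic Hamiltonians $tz$ and $\tfrac12t^2$ kills the remaining symbol. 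I checked the two steps you flagged as tentative: with $P$ constant-coefficient and vanishing to order $\ge 2$, the term $\{\psi(tz),B\}$ drops out (since $\psi(tz)$ is a constant), and the $A=tz$ identity does reduce to $\xi P_\xi=\eta P_\eta$ on symbols, while $A=\tfrac12t^2$ gives $P_\xi\circ\p_z=0$, hence $P=0$; so the collapse goes through exactly as you predicted. Your argument buys conceptual clarity and less bookkeeping --- it is the standard rigidity-via-low-degree-Hamiltonians (affine symplectic subalgebra) technique and adapts more easily to other modules; the paper's brute-force ansatz buys an explicit closed form for the general cocycle in one pass, which is what the authors reuse verbatim in the higher computations of Theorems \ref{thm2}--\ref{thm3!}. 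Both arguments rest on the same standing hypothesis that cocycles are differential operators.
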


 \begin{proof}
A linear map $s \colon C^\infty(M)\to C^\infty(M)$ is a 1-cocycle if
 $$
\{s(h),f\}+\{h,s(f)\}=s(\{h,f\}).
 $$
Let $h_{(i,j)}=\D_t^i\D_z^j(h)$, where $\D_t$ is the total derivative by $t$, and similar for $\D_x$.
Writing $s(h)=\sum\l_{ij}h_{(i,j)}$ in the above relation, with the functions $\l_{ij}$
depending on $(t,z)\in M$, we get an expression $\Psi(h,f)=0$ where $\Psi$ is a bilinear
bi-differential operator.

The coefficients of $h_{(1,0)}f_{(i,j)}$ and $h_{(0,1)}f_{(i,j)}$ with $i+j\ne1$
give $\l_{ij}=\op{const}$,
and the coefficients of $h_{(i,j)}f_{(k,l)}$ with $i+j>1$, $k+l>1$, give
$\l_{ij}=0$ for $i+j>1$.

Furthermore the coefficient of $h_{(1,0)}f_{(0,1)}$ gives the equation
 $$
\frac{\partial\l_{10}}{\partial t}+\frac{\partial\l_{01}}{\partial z}=-\l_{00}=\mathrm{const}
 $$
with the solution $\l_{01}=q_t-\frac12\l_{00}z$, $\l_{10}=-q_z-\frac12\l_{00}t$,
where $q=q(t,z)$ is an arbitrary function. This gives the general formula for the cocycle
 $$
s(f)=(q_tf_z-q_zf_t)-\frac12\l_{00}(tf_t+zf_z-2f).
 $$
The first expression in parentheses is the trivial 1-cocycle $\{q,f\}$, while the
second expression in parentheses is not cohomologous to zero.
 \end{proof}

This cocycle allows us to extend the Hamiltonian vector fields on $M=\R^2=J^0(\R,\R)$
to the contact vector fields on $\hat M=M\times\R(u)=J^1(\R,\R)$, namely the new fields are
 $$
X_h+(th_t+zh_z-2h)\p_u=h_z(\p_t+z\p_u)-h_t(\p_z-t\p_u)-2h\p_u
 $$
and an easy change of coordinates brings this field to the canonical form
of the contact Hamiltonian field.

Thus the extension is the standard contact extension, and this algebra can be prolonged
to the algebra of Lie (higher contact) fields in $J^k(\R,\R)$. This however still
acts transitively (no differential invariants) and by this reason in the next section
we extend the manifold $M$ to a 4-dimensional space.

If $M=\mathbb{S}^2$, then we must consider instead the circle bundle $\hat M=PT^*M$
over $M$, and the same arguments work.
In the next section we use only the case $M=\R^2$ to illustrate the argument in coordinates
(the construction is covariant and does not depend on the choice of coordinates,
canonical in the sense of Darboux theorem),
and we do not discuss the counterpart for the sphere.

\section{Extension I: Tangent bundle}\label{S2}

In this section we discuss extension of $M$ by 2 dimensions. There are two natural
candidates: the tangent and the cotangent bundles, and they are isomorphic.

The area form $\varpi=dt\we dz$ on $M=\R^2(t,z)$ induces the isomorphism $TM\simeq T^*M$,
$v\mapsto i_v\varpi$. In the canonical coordinates induced on both bundles by the
coordinates on $M$ this correspondence writes as $(x,y)\leftrightarrow(-y,x)$. We are more
interested in the tangent bundle. The canonical Liouville form from $T^*M$ writes on it
as $\z=x\,dz-y\,dt\in\Omega^1(TM)$.

The canonical symplectic form
 $$
\oo=d\z=dx\we dz+dt\we dy\in\Omega^2(TM)
 $$
is related to the pull-back of $\varpi$ (to 4-dimensional $TM$) by the operator field
$2K=\p_x\ot dt+\p_y\ot dz$, $i_K\oo=\varpi$.

Even more important ingredient is the truncated total field
 $$
\nabla=x\p_t+y\p_z.
 $$
If we identify $J^1(\R,M)=\R\times TM\simeq \R^1(\tau)\times\R^4(t,z,x,y)$,
the total derivative is $\D_\t=\p_\t+x\p_t+y\p_z+\dots$ and we quotient $J^1(\R,M)$ by
the first factor (consider $\t$ independent functions on this jet-space $J^1$).
The field $\nabla$ relates the two symplectic forms as
 $$
i_\nabla\varpi=\z\ \ \Rightarrow\ \ L_\nabla\varpi=\oo.
 $$
In addition we have that $2K(\nabla)=x\p_x+y\p_y$ is the Liouville (radial vertical) field on $TM$.

This $\nabla$ however is not a vector field (as a total differential its tail, namely the part containing
$\p_x,\p_y$ is not uniquely defined, see e.g. \cite{KL}), but only a first order differential operator
 $$
\nabla \colon C^\infty(M)\to C^\infty(TM).
 $$

 \begin{prop}
A linear differential operator $\tilde\nabla \colon C^\infty(M)\to C^\infty(TM)$
of order 1 satisfying
 $$
\tilde\nabla\{A,B\}_\varpi=\{\tilde\nabla A,\tilde\nabla B\}_\oo
 $$
has the form $\tilde\nabla=\nabla+\{q,\cdot\}$, $q\in C^\infty(M)$.
 \end{prop}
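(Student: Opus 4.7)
The plan is to mirror the cocycle calculation in Theorem~\ref{thm1}: parameterize $\tilde\nabla$ in local coordinates, substitute into the homomorphism identity, and extract equations by matching coefficients of monomials in the partial derivatives of $A$ and $B$.

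First I would write a general linear first-order operator as $\tilde\nabla(A) = \alpha A_t + \beta A_z + \gamma A$ with unknowns $\alpha, \beta, \gamma \in C^\infty(TM)$, where $\alpha, \beta$ carry the principal symbol and $\gamma = \tilde\nabla(1)$ is the potential zeroth-order part. Substituting into $\tilde\nabla\{A,B\}_\varpi = \{\tilde\nabla A, \tilde\nabla B\}_\oo$, with $\{F,G\}_\oo = F_xG_z - F_zG_x + F_tG_y - F_yG_t$, and expanding both sides as bilinear bidifferential expressions in $A, B$, I would set the coefficient of each monomial $A_{(i,j)} B_{(k,l)}$ to zero; this gives a system of PDEs for the unknowns.

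Then I would solve the system in stages. Matching the top-order coefficients (those of $A_{tt} B_z$, $A_{tz} B_z$ and their symmetric counterparts) forces $\alpha_x = \beta_y = 1$ and $\alpha_y = \beta_x = 0$, so $\alpha = x + a(t,z)$ and $\beta = y + b(t,z)$. The coefficients of $A_{tt} B, A_{zz} B, A_t B, A B_z$ next force $\gamma_x = \gamma_y = \gamma_t = \gamma_z = 0$, so $\gamma$ is a constant, and the coefficient of $A_t B_z$ supplies the scalar identity $a_t + b_z + \gamma = 0$. Since $\nabla$ itself is a derivation, the natural reading of ``order 1'' here is $\tilde\nabla(1) = 0$, forcing $\gamma = 0$ and hence $a_t + b_z = 0$. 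The Poincar\'e lemma on the simply-connected $M$ produces $q \in C^\infty(M)$ with $a = -q_z$ and $b = q_t$, giving
$$
\tilde\nabla(A) = (x-q_z)A_t + (y+q_t)A_z = \nabla A + \{q, A\}_\varpi.
$$

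The main obstacle is the bookkeeping in the middle step: each monomial $A_{(i,j)} B_{(k,l)}$ can receive contributions from several of the four products $F_x G_z$, $-F_z G_x$, $F_t G_y$, $-F_y G_t$ (with $F = \tilde\nabla A$, $G = \tilde\nabla B$), and tracking how the derivatives of $\alpha, \beta, \gamma$ intertwine with those of $A, B$ is error-prone. The calculation is voluminous but entirely routine, following the same pattern as the cocycle expansion used in Theorem~\ref{thm1}.
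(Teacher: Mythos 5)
Your overall strategy --- write $\tilde\nabla$ in coordinates, plug it into the homomorphism identity, and match coefficients of the bidifferential monomials $A_{(i,j)}B_{(k,l)}$ --- is exactly the paper's, and the system you extract ($\alpha_x=\beta_y=1$, $\alpha_y=\beta_x=0$, $\gamma$ constant, and $a_t+b_z+\gamma=0$) is the right one. One small omission: to divide by $\alpha$ or $\beta$ in the top-order equations you must first rule out their vanishing; the paper does this explicitly (if, say, $b=0$, then $A=z$, $B=t^2$ already contradicts the identity).

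The substantive issue is your handling of the zeroth-order term, and here you should not paper over what your own computation has (correctly) uncovered. The statement says ``linear differential operator of order $1$,'' which includes a zeroth-order part; the paper's proof writes $\tilde\nabla=a\p_t+b\p_z+c$ and claims that substituting $B=1$ forces $c=0$. It does not: $B=1$ only gives $\{\tilde\nabla A,c\}_\oo=0$ for all $A$, which together with the remaining equations forces $c$ to be constant --- exactly what you found. A nonzero constant $\gamma$ is genuinely admissible: solving $a_t+b_z=-\gamma$ by $a=-q_z-\tfrac{\gamma}{2}t$, $b=q_t-\tfrac{\gamma}{2}z$ gives $\tilde\nabla=\nabla+\{q,\cdot\}-\tfrac{\gamma}{2}\,\z_1$, where $\z_1(A)=tA_t+zA_z-2A$ is the cocycle of Theorem~\ref{thm1}, and this is a Poisson homomorphism (check it on $A=t$, $B=z$; conceptually, $\pi^*C^\infty(M)$ is abelian for $\{\,,\}_\oo$ and $\{\nabla A,\pi^*g\}_\oo=\pi^*\{A,g\}_\varpi$, so $\nabla+\pi^*\circ s$ is a homomorphism for \emph{every} $1$-cocycle $s$ of Theorem~\ref{thm1}, not only for the coboundaries $\{q,\cdot\}$). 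Consequently you cannot dispose of $\gamma$ by declaring that ``order $1$'' means $\tilde\nabla(1)=0$; that is an additional hypothesis, not a reading of the statement. Either impose it explicitly (in which case your argument is complete and cleaner than the paper's), or record the honest output of the coefficient equations, namely $\tilde\nabla=\nabla+\{q,\cdot\}+\l\,\z_1$. The step that is too quick is the paper's ``$B=1\Rightarrow c=0$,'' not your calculation.
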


 \begin{proof}
The general form of first order differential operator is
 $$
\tilde\nabla=a(t,z,x,y)\p_t+b(t,z,x,y)\p_z+c(t,z,x,y).
 $$
In addition, $a$ and $b$ are not simultaneously zero. Thus substituting $B=1$ we obtain $c=0$.

Next it is easy to see that both $a,b\ne0$. Indeed if e.g. $b=0$, then taking $A=z$, $B=t^2$ we
get a contradiction.

Finally consider the coefficients of $A_{tt}B_t$, $A_{tt}B_z$, $A_{zz}B_t$, $A_{zz}B_z$ in
the defining relation. They imply the system $a_x=1$, $a_y=0$, $b_x=0$, $b_y=1$. The coefficient
of $A_tB_z$ gives $a_t+b_z=0$. This gives the following form
 $$
\tilde\nabla=(x-q_z)\p_t+(y+q_t)\p_z
 $$
for some function $q=q(t,z)$. In other words $\tilde\nabla A=\nabla A+\{q,A\}$.
 \end{proof}

Thus we see that though $\nabla$ is natural, the condition that the operator
preserves the Poisson brackets leads us to consideration of a pair of independent Hamiltonians
on $TM$: $\nabla A_0$ and $A_1$ (the later is lifted from $M$ via pull-back),
$A_i\in\Pp$ (we use the fact that the operator $\{q,\cdot\}:C^\infty(M)\to C^\infty(M)$
is epimorphic; in the case of the sphere with the additional condition that the Hamiltonians have zero mean).

In other words we have the graded Poisson algebra $\Hh_1=\Pp_0\oplus\Pp_1$ (the index refers to the grading),
where $\Pp_i\simeq\Pp$ and the bracket is given by $\{A_i,B_j\}=\{A,B\}_{i+j}$ (and we assume that the
grading 2 is void). This admits the graded Lie algebra homomorphism
 $$
V \colon \Hh_1\to\Dc_1,
 $$
where $\Dc_1=\Dc(TM)_0\oplus\Dc(TM)_1$ is the graded Lie algebra consisting of vector fields
in pure gradings, with the bracket being given by the commutator and the same truncation rule as above.

This homomorphism associates to the element $(A,0)\in\Hh_1$ the Hamiltonian vector field $X_{\nabla A}$,
and to the element $(0,A)\in\Hh_1$ the field $X_{A}$
(both with respect to the symplectic structure $\oo$ on $TM$).

In canonical coordinates $(t,z,x,y)$ on $TM$ we can write this homomorphism as
$(A_0,A_1)\mapsto(V_0(A_0),V_1(A_1))$ with
 \begin{gather*}
V_0(A)=A_z\p_t-A_t\p_z+(A_{tz}x+A_{zz}y)\p_x-(A_{tt}x+A_{tz}y)\p_y,\\
V_1(A)=A_z\p_x-A_t\p_y.
 \end{gather*}
Notice that both vector fields are projectible to Hamiltonian vector fields on $(M,\varpi)$,
with the Hamiltonians $A$ and $0$ respectively.

It is natural to ask if the above Lie algebra homomorphism extends to bigger
truncated graded Lie algebras $\Hh_k=\Pp_0\oplus\dots\oplus\Pp_k$ and
$\Dc_k=\Dc(TM)_0\oplus\dots\Dc(TM)_k$ with the same rule that
$\{A_i,B_j\}=\{A,B\}_{i+j}$ if $i+j\le k$ and $=0$ if $i+j>k$ (and similar in the case
of vector fields: $[V_i,W_j]=[V,W]_{i+j}$ or $=0$ if $i+j>k$).

 \begin{prop}\label{Proposition3}
For $k>1$ the graded Lie algebra homomorphism $V \colon \Hh_k\to\Dc_k$ extending the one for $k=1$
vanishes in all gradings $>1$.
 \end{prop}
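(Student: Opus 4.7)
The plan is to prove $V_i\equiv 0$ for every $1<i\le k$ by induction on $i$, with the base case $i=2$ doing essentially all the work. Throughout I exploit the graded homomorphism condition: for each bigrading $(i,j)$ with $i+j\le k$,
\[
V_{i+j}(\{A,B\}) = [V_i(A),V_j(B)],
\]
while the truncation forces the right-hand commutator to vanish whenever $i+j>k$.

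For the base case $i=2$, choosing the bigrading $(1,1)$ gives $V_2(\{A,B\})=[V_1(A),V_1(B)]$. Using the explicit formula $V_1(A)=A_z\p_x-A_t\p_y$ inherited from the $k=1$ extension, both vector fields have coefficients depending only on the base coordinates $(t,z)$ and lie entirely in the fibre directions $\langle\p_x,\p_y\rangle$; hence every directional derivative appearing in the commutator annihilates, and $[V_1(A),V_1(B)]\equiv 0$. Consequently $V_2(\{A,B\})=0$ for all $A,B\in\Pp$. Since the Poisson bracket $\{\,,\,\}\colon\Pp\otimes\Pp\to\Pp$ is surjective on $\R^2$---any $F$ arises as $F=\{t,G\}$ with $G_z=F$---this gives $V_2\equiv 0$. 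The inductive step is then immediate: for $3\le i\le k$, assuming $V_{i-1}=0$, the bigrading $(1,i-1)$ relation reads
\[
V_i(\{A,B\})=[V_1(A),V_{i-1}(B)]=0,
\]
and surjectivity again yields $V_i\equiv 0$. Once all these vanish, the remaining homomorphism constraints (equivariance under $V_0$, and the truncation identities $[V_i,V_j]=0$ for $i+j>k$) are trivially satisfied.

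I expect the only substantive observation to be the base identity $[V_1(A),V_1(B)]=0$, which encodes a triangular structure of $V_1$: it is purely vertical with base-pulled-back coefficients, so any two such fields commute. Everything else reduces to bookkeeping together with the elementary surjectivity of the Poisson bracket on $\R^2$, a fact worth flagging since it would require amendment (restriction to zero-mean functions) in the $\mathbb{S}^2$ counterpart.
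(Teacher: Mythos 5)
Your proof is correct and takes essentially the same route as the paper: the paper's argument is exactly the observation that $V_2(\{A_1,B_1\})=[V_1(A_1),V_1(B_1)]=0$ (since the $V_1$'s are vertical with base coefficients), combined with surjectivity of the Poisson bracket and an implicit induction for higher gradings. You have merely spelled out the details (the explicit commutator computation, the surjectivity, and the inductive step) that the paper compresses into ``similarly''.
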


 \begin{proof}
Indeed, $V_2(\{A_1,B_1\})=[V_1(A_1),V_1(B_1)]=0$ for all $A_1,B_1\in\Pp_1$.
Since any element $C_2\in\Pp_2$ can be written as $\{A_1,B_1\}$ we get $V|_{\Pp_2}=0$.
Similarly we conclude that $V|_{\Pp_i}$ is trivial for all $i>1$.
 \end{proof}

\section{Extension II: Towards functions in 4D}\label{S3}

Our strategy is to calculate the differential invariants, but for this we need to
extend the action from the space $TM$ to the space $J^0=TM\times\R\simeq\R^4(t,z,x,y)\times\R^1(u)$.

A homomorphism $V\colon \mathcal{G}\to\Dc(TM)$ of a Lie algebra $\mathcal{G}$ 
extends to a homomorphism $\hat V \colon \mathcal{G}\to\Dc(J^0)$,
$\hat V(\alpha)=V(\alpha)+\Psi(\alpha)\p_u$, iff $\Psi$ is a 1-cocycle on $\mathcal{G}$
with values in the module $C^\infty(TM)$ (the action is via the representation $V$).
As usual the 1-cocycles that differ by 1-coboundary define isomorphic extensions.
Thus we need to calculate the cohomology group $H^1(\mathcal{G},C^\infty(TM))$.

In this section we focuss on the simplest case, when $\mathcal{G}$ is $\Hh_0=\Pp$.

 \begin{theorem}\label{thm2}
We have $H^1(\Hh_0,C^\infty(TM))=\R^2$. In the canonical coordinates $(t,x,y,z)$ on $TM$
the cocycles $\z_1(A)=tA_t+zA_z-2A$ and $\z_{\text{GF}}(A) = \nabla^3(A)$  can be taken as a
base for the cohomology group.
 \end{theorem}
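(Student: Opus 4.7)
\medskip
\noindent\textbf{Proof plan.} The strategy is to split the target module by homogeneity degree in the fiber coordinates $(x,y)$. From
$$V_0(A) = A_z \p_t - A_t \p_z + (A_{tz} x + A_{zz} y) \p_x - (A_{tt} x + A_{tz} y) \p_y$$
each summand has fiber-weight zero (in the prolongation piece a factor of $x$ or $y$ is paired with $\p_x$ or $\p_y$), so the subspaces $\mathcal{F}_k \subset C^\infty(TM)$ of functions polynomial-homogeneous of degree $k$ in $(x,y)$ are $\Pp$-submodules. Writing $C^\infty(TM) = \prod_{k \ge 0} \mathcal{F}_k$ as a Taylor expansion along the zero section, the cocycle equation decouples and
$$H^1(\Hh_0, C^\infty(TM)) = \prod_{k \ge 0} H^1(\Pp, \mathcal{F}_k).$$
Note $\z_1 \in \mathcal{F}_0$ and $\z_{\text{GF}}(A) = \nabla^3(A) = x^3 A_{ttt} + 3 x^2 y A_{ttz} + 3 x y^2 A_{tzz} + y^3 A_{zzz} \in \mathcal{F}_3$, so the goal is to show that only these two summands contribute, with a single class each.

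\medskip
The summand $\mathcal{F}_0 \cong C^\infty(M)$ carries the action $V_0(A)|_{\mathcal{F}_0} = X_A$, so Theorem~\ref{thm1} already gives $H^1(\Pp, \mathcal{F}_0) = \R \cdot [\z_1]$, with the coboundaries $V_0(A)(q) = \{A,q\}$ matching the trivial cocycles there. For $k \ge 1$ I would write a local cocycle $s_k \colon \Pp \to \mathcal{F}_k$ as a finite-order differential operator $s_k(A) = \sum_{i,j} \l_{ij}(t,z,x,y)\, \p_t^i \p_z^j A$ with $\l_{ij}$ homogeneous of degree $k$ in $(x,y)$, substitute it into
$$V_0(A)\,s_k(B) - V_0(B)\,s_k(A) = s_k(\{A,B\}),$$
and, as in the proof of Theorem~\ref{thm1}, extract equations from the coefficients of the leading bilinear monomials in the derivatives of $A$ and $B$. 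Descending through orders and absorbing the remaining freedom into a coboundary $V_0(A)(q)$ with $q \in \mathcal{F}_k$ should reduce each $s_k$ to a canonical representative. I expect the outcome $H^1(\Pp, \mathcal{F}_k) = 0$ for $k \in \{1, 2\}$ and $k \ge 4$, and $H^1(\Pp, \mathcal{F}_3) = \R \cdot [\nabla^3]$.

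\medskip
The main obstacle is organizing this $k \ge 1$ computation. The prolongation terms in $V_0(A)$ couple the $(x,y)$-derivatives of the $\l_{ij}$ to the $(t,z)$-derivatives of $A$, producing a substantially larger PDE system than in Theorem~\ref{thm1}. A guiding principle, foreshadowed by the preceding section, is that $\nabla = x\p_t + y\p_z$ intertwines the two Poisson structures modulo lower-order terms; this both explains why the powers $\nabla^j$ furnish natural candidate cocycles and, on direct check for low $j$, picks out $j = 3$ as the unique power giving a non-coboundary cocycle. Non-triviality of $[\z_{\text{GF}}]$ is then visible by order count, since $\nabla^3$ is a third-order operator in $A$ while any coboundary $V_0(A)(q)$ is of order at most two in $A$; non-triviality of $[\z_1]$ is the same zero-order obstruction as in Theorem~\ref{thm1}; and independence of the two classes is automatic, as they live in different graded components.
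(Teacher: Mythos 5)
Your organizing idea is sound and genuinely different from the paper's proof. You grade the target module by fiber-homogeneity in $(x,y)$, which works because every term of $V_0(A)$ has fiber-weight zero, so each $\mathcal{F}_k$ is a $\Pp$-submodule; you then reduce the $k=0$ piece to Theorem~\ref{thm1} and isolate $\nabla^3$ in $\mathcal{F}_3$. The paper instead attacks $H^1$ head-on: it writes $\Psi(A)=\sum\l_{ij}A_{(i,j)}$ with arbitrary smooth $\l_{ij}\in C^\infty(TM)$, shows the coefficients of $A_{(i,j)}$ with $i+j>2$ in the cocycle identity force $\l_{ij}\equiv0$ for $i+j>3$, and solves the remaining overdetermined system modulo coboundaries, obtaining a three-parameter family $c_1\z_1+c_2\nabla^3+c_3(y^{-2}A_{tt}+x^{-2}A_{zz})$ and discarding $c_3$ by smoothness. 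Your grading buys a cleaner bookkeeping and an elegant non-triviality argument (order count in $A$: coboundaries $V_0(A)(F)$ are at most second order, $\nabla^3$ is third order; $\z_1$ has a zeroth-order term that no coboundary has), and the two classes are independent for free since they sit in different graded pieces.

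However, as a proof the proposal has a real gap: the entire content for $k\ge1$ is announced rather than established. You never verify that $\nabla^3$ satisfies the cocycle identity $V_0(A)\nabla^3B-V_0(B)\nabla^3A=\nabla^3\{A,B\}$ (this is the key computational fact and is not obvious --- note that $\nabla$ itself is \emph{not} a cocycle, by the Proposition of Section~\ref{S2}), you do not derive the order bound $i+j\le3$ on the operators $s_k$, and the vanishing of $H^1(\Pp,\mathcal{F}_k)$ for $k\in\{1,2\}$ and $k\ge4$ is stated as an expectation. Since the statement being proved is precisely that these are the only classes, "I expect the outcome" is where the theorem actually lives. Two smaller points: the identification $C^\infty(TM)=\prod_k\mathcal{F}_k$ is not literally true (functions flat along the zero section), so you should phrase the reduction at the level of Taylor coefficients of the $\l_{ij}$ and note that the resulting equations force polynomial dependence; and be aware that the paper's computation produces a third solution $y^{-2}A_{tt}+x^{-2}A_{zz}$ living outside your polynomial grading --- your framework excludes it silently, whereas the smoothness requirement that kills it deserves to be made explicit.
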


 \begin{proof}
Let $\Psi(A)=\sum\l_{ij}A_{(i,j)}$ for some functions $\l_{ij}\in C^\infty(TM)$.
Then the coefficients of $A_{(i,j)}=\D_t^i\D_z^j(A)$, $i+j>2$, in the cocycle identity
 \begin{equation}\label{cocycle_eqn_1}
V_0(A)(\Psi(B))-V_0(B)(\Psi(A))  = \Psi(\{A,B\})
 \end{equation}
yield $\l_{ij} \equiv 0$ for $i+j>3$. The remaining coefficients of $A_{(i,j)},B_{(i,j)}$ with $i+j\le2$
in (\ref{cocycle_eqn_1}) provide an over-determined system of PDEs for the functions $\l_{ij}$, $i+j\le3$.
Taking into account that the solution  $\Psi(A)=\sum \limits_{0\le i+j\le 3}\l_{ij}\D_t^i\D_z^j(A)$
of this system is defined up to adding a coboundary $V_0(A)\, \lrcorner \,  dF$ with an arbitrary $F \in C^\infty(TM)$, 
we obtain the general solution
 \[
\Psi(A) = c_1\,(tA_t+zA_z-2A)+c_2\,\nabla^3(A) + c_3\,(y^{-2}A_{tt}+x^{-2}A_{zz}).
 \]
The requirement $\l_{ij}\in C^\infty(TM)$ implies $c_3=0$.
 \end{proof}

 \bigskip

\noindent
{\bf Remark on Gelfand-Fuks cohomology.}
Calculations of cohomology of infinite-dimensional algebras have started in 1968 with
the Lie algebra $\g=\Dc(\mathbb{S}^1)$ \cite{GF}. In particular, this work introduced
the celebrated Gelfand-Fuks cocycle $c_{\text{GF}}$ as the generator of $H^2(\g)$.

Let us notice that the natural morphism $\delta \colon C^2(\g)\to C^1(\g,\g')$, where the
regular dual $\g'=\mathcal{F}_2=\{f(\vp)d\vp^2\}$ is the space of quadratic differentials,
maps cocycles to cocycles \cite{F}. It induces an isomorphism in cohomology, and $H^1(\g,\g')=\R$
has generator $\d c_{\text{GF}}$ represented by $f(\vp)\p_\vp\mapsto f'''(\vp)d\vp^2$.
On the level of Lie groups $H^1(\op{Diff}(\mathbb{S}^1),\mathcal{F}_2)$ is 1-dimensional and generated by
the Schwarzian derivative \cite{OT}.

The higher-dimensional versions of Schwarzian derivatives exist, and they are cocycles on
$\op{Diff}(M)$ with values in (2,1)-tensor fields. The Lie algebra version in dimension 2
when restricted to the algebra $\Dc_0(M)\subset \Dc(M)$ takes values in $\Gamma(S^3T^*M)$
(notice that $\dim S^3T^*_aM=4$ for $\dim M=2$) and is given by the formula \cite{OT}:
 $$
\hat{c}_{\text{GF}}(X_F)=d^3F.
 $$
This is clearly the 2-dimensional analog of the Gelfand-Fuks cocycle
(we think about 1-cocycle given by the morphism $\d$).

In our case the cocycle $\z_{\text{GF}}=\nabla^3$ takes values in the space of functions on another
4-dimensional space $TM$ (our version gives a lower-dimensional representation of elements of
the Lie algebra by vector fields). Thus it can be considered as the generalized
Gelfand-Fuks 1-cocycle in the case of Lie algebra $\Dc_0(M)$.

Our construction has some similarity with the one in \cite{OT},
which explores the double $\g\oplus\g'$ (followed by passing to the current algebra to
increase the dimension of the configuration space), but in our case $\Hh_0\oplus\Hh_1$
the second summand is adjoint (not co-adjoint) module and (what is more important) all 
extensions do satisfy the Lie pseudogroup property: they come with natural representation by 
(the sheaf of) vector fields and are given by determining differential equations.

\section{Extension III: Formal series and natural truncation}\label{S4}

Now we consider the case, when $\mathcal{G}$ is $\Hh_1=\Pp_0\oplus\Pp_1=\{(A_0,A_1)\}$.
The same computations as in theorem \ref{thm2} give

 \begin{theorem}\label{thm3}
$H^1(\Hh_1,C^\infty(TM))=\R^2$, and the following two cocycles form its basis:
$\z_1(A_0)=tA_{0,t}+zA_{0,z}-2A_{0}$ and $\z_2(A_1) = A_1$.
 \end{theorem}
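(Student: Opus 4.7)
My plan is to decompose a 1-cochain on $\Hh_1=\Pp_0\oplus\Pp_1$ with values in $C^\infty(TM)$ as a pair $\Psi=(\Psi_0,\Psi_1)$, $\Psi_i\colon\Pp_i\to C^\infty(TM)$, with each $\Psi_i$ a differential operator. The cocycle identity $\d\Psi(A,B)=0$ then splits, according to the grading of the pair $(A,B)\in\Hh_1\times\Hh_1$, into three independent conditions: (I) $V_0(A)\Psi_0(B)-V_0(B)\Psi_0(A)=\Psi_0(\{A,B\})$ on $\Pp_0\times\Pp_0$; (II) $V_0(A)\Psi_1(B)-V_1(B)\Psi_0(A)=\Psi_1(\{A,B\})$ on $\Pp_0\times\Pp_1$; and (III) $V_1(A)\Psi_1(B)-V_1(B)\Psi_1(A)=0$ on $\Pp_1\times\Pp_1$, the right-hand side of (III) vanishing because grading $2$ is void in $\Hh_1$.

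Condition (I) is precisely the cocycle identity solved in theorem \ref{thm2}, so $\Psi_0=c_1\z_1+c_2\nabla^3$ for some $c_1,c_2\in\R$ up to a coboundary. Since the 1-coboundary on $\Hh_1$ is $\d F=(V_0(\cdot)F,V_1(\cdot)F)$ for $F\in C^\infty(TM)$, I would absorb this freedom by normalizing $\Psi_0=c_1\z_1+c_2\nabla^3$ exactly, noting that no further coboundary freedom remains (any $F$ with $V_0(\cdot)F\equiv 0$ is locally constant, so $V_1(\cdot)F\equiv 0$ as well). Next I would parametrize $\Psi_1(B)=\sum\m_{ij}(t,z,x,y)B_{(i,j)}$ (finite sum) and run the monomial-matching strategy from theorems \ref{thm1} and \ref{thm2} on equation (III): comparing coefficients of $A_{(p,q)}B_{(i,j)}$ with $(p,q)\in\{(1,0),(0,1)\}$ and $(i,j)$ outside this pair forces $(\m_{ij})_x=(\m_{ij})_y=0$ for those indices, so $\m_{ij}=\m_{ij}(t,z)$; the two exceptional indices are coupled only by the single relation $(\m_{10})_x+(\m_{01})_y=0$, while $\m_{00}$ stays free.

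The decisive step is equation (II). Since the $\z_1$-part of $\Psi_0$ is independent of $(x,y)$, it drops out of $V_1(B)\Psi_0(A)$, leaving
\[
V_1(B)\Psi_0(A)=c_2(B_z\p_x-B_t\p_y)\nabla^3A=3c_2\bigl[B_z\nabla^2(A_t)-B_t\nabla^2(A_z)\bigr],
\]
a bi-differential expression of $A$-order exactly $3$ with polynomial $(x,y)$-coefficients. On the other side of (II), $V_0(A)\Psi_1(B)$ has $A$-order at most $2$, and $\Psi_1(\{A,B\})$ produces $A$-order-$3$ terms only via the coefficients $\m_{ij}(t,z)$ with $i+j=2$ (higher-order $\m_{ij}$ would yield higher $A$-order and so must vanish by the same matching). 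Matching the coefficient of $A_{(3,0)}B_z$, for instance, gives $\m_{20}(t,z)=3c_2x^2$, which is absurd unless $c_2=0$ (and simultaneously $\m_{20}=0$); similar matchings at $A_{(0,3)}B_t$, $A_{(2,1)}B_z$, $A_{(1,2)}B_t$ kill the remaining $i+j=2$ coefficients. With $c_2=0$, equation (II) reduces to the $\Pp$-equivariance $V_0(A)\Psi_1(B)=\Psi_1(\{A,B\})$, and an Ansatz analysis on $\m_{00},\m_{10},\m_{01}$ in the style of theorem \ref{thm1} — using specific inputs such as $B=1,t,z$ and combining with the relation $(\m_{10})_x+(\m_{01})_y=0$ from (III) — forces $\m_{10}=\m_{01}=0$ and $\m_{00}=c_3\in\R$, yielding $\Psi_1(B)=c_3B$.

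The hard part is really only the bookkeeping in the middle: one must simultaneously track the $A$-order, $B$-order and $(x,y)$-polynomial degree of each monomial in (II), using (III) to guarantee that the $\m_{ij}$ with $i+j\ge 2$ depend only on $(t,z)$, in order to ring-fence the $c_2\nabla^3$ obstruction and show that it cannot be compensated by any admissible $\Psi_1$. Once this rigidity is established, the cohomology classes $\z_1$ and $\z_2(A_1)=A_1$ are manifestly independent as they have disjoint graded supports, and the theorem follows.
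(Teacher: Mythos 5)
Your proposal is correct and follows essentially the same route as the paper, which proves Theorem \ref{thm3} by "the same computations as in Theorem \ref{thm2}", i.e.\ the Ansatz-and-coefficient-matching method that you carry out, merely organized by splitting the cocycle identity according to the grading of $\Hh_1$. The only quibble is that condition (III) in fact also forces $(\m_{00})_x=(\m_{00})_y=0$ (the monomial $A_zB_{(0,0)}$ has no partner), so $\m_{00}$ does not "stay free" -- but this only strengthens your argument and does not affect the conclusion.
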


This result seems to be rather disappointing, since the most interesting cocycle $\nabla^3(A_0)$
disappears after passing from $\Hh_0=\Pp$ to $\Hh_1=\Pp\ot\R[[\e]]/\{\e^2=0\}$.
The reason is the cut tails in the series.

To overcome the problem, we consider the Lie algebra of formal series
$\Hh_\infty=\Pp_0\oplus\Pp_1\oplus\dots\oplus\Pp_k\oplus\dots =\Pp\ot\R[[\e]]$.
We want to find an extension of the homomorphism $V \colon \Hh_\infty\rightarrow \Dc(TM)$ to
 $$
\hat{V} \colon \Hh_\infty\to\Dc(J^0)
 $$
(recall that $J^0=TM\times\R$).
This is given by 1-cocycle $\Psi$ on $\Hh_\infty$ with values in $C^\infty(TM)$
via representation $V$. By Proposition \ref{Proposition3} this latter is equal to
 $$
V \colon A=\sum\limits_{k=0}^{\infty}\e^k A_k \mapsto V_0(A_0)+\,V_1(A_1).
 $$
So the defining relation $V(A_i)\Psi(B_j)-V(B_j)\Psi(A_i)=\Psi(\{A,B\}_{i+j})$
implies that the 1-cocycle $\Psi$ on $\Hh_\infty$ vanishes in grading $>3$.
The same computations as in theorems \ref{thm1} and \ref{thm2} yield

 \begin{theorem}\label{thm3!}
The following 1-cocycles form a basis in the cohomology group $H^1(\Hh_\infty,C^\infty(TM))$:
 \begin{eqnarray*}
\zeta_1(A) &=& \tfrac16\nabla^3(A_0) + \tfrac12\,\nabla^2(A_1)+\nabla(A_2)+A_3, \\
\zeta_2(A) &=& \nabla(A_1)+2\,A_2, \\
\z_1(A) &=& t\,A_{0,t}+z\,A_{0,z} -2\,A_0, \\
\z_2(A) &=& A_1.
 \end{eqnarray*}
 \end{theorem}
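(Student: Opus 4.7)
The plan is to decompose any cocycle $\Psi \colon \Hh_\infty \to C^\infty(TM)$ by grading as $\Psi = \sum_{k\ge0}\Psi_k$ with $\Psi_k \colon \Pp_k \to C^\infty(TM)$, and to exploit Proposition \ref{Proposition3}, which says $V_i = 0$ for $i\ge 2$, so that the cocycle identity applied to a pure-grade pair $(A_i, B_j)$ collapses to
\[
V_i(A_i)\,\Psi_j(B_j) - V_j(B_j)\,\Psi_i(A_i) = \Psi_{i+j}(\{A_i,B_j\}).
\]
First I would exhaust the high gradings. Taking $i,j\ge2$ kills the left-hand side, and since the Poisson bracket $\Pp_i\times\Pp_j\to\Pp_{i+j}$ is (locally) surjective, this forces $\Psi_k\equiv0$ for all $k\ge4$. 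The pair $(1,3)$ then gives $V_1(A_1)\Psi_3(B_3)=0$, so $\Psi_3(B_3)$ is independent of the fibre coordinates $x,y$; the pair $(0,3)$ reduces to the intertwining property $V_0(A_0)\Psi_3(B_3)=\Psi_3(\{A_0,B_3\})$ on $C^\infty(M)$, whose Hamiltonian-action commutant is one-dimensional, giving $\Psi_3(B_3)=\alpha\,B_3$ for some $\alpha\in\R$.

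Next I would propagate downward. The pair $(1,2)$ reads $V_1(A_1)\Psi_2(B_2)=\alpha\{A_1,B_2\}$, and reading off the coefficients of $A_{1,t}$ and $A_{1,z}$ pins the $\p_x$- and $\p_y$-derivatives of $\Psi_2(B_2)$ to those of $\nabla(B_2)=xB_{2,t}+yB_{2,z}$, yielding $\Psi_2(B_2)=\alpha\,\nabla(B_2)+r_2(B_2)$ with $r_2(B_2)\in C^\infty(M)$. The coupling equations $(1,1),(0,2),(0,1),(0,0)$, analysed exactly as in Theorems \ref{thm1} and \ref{thm2} by expanding in the total derivatives $\D_t^i\D_z^j$ and matching monomials $A_{(i,j)}B_{(k,l)}$, organise the remaining components into the same descending cascade and produce the coupled family $\alpha\,\zeta_1=\alpha\,\bigl(\tfrac16\nabla^3 A_0+\tfrac12\nabla^2 A_1+\nabla A_2+A_3\bigr)$; the factorial coefficients $1/(3{-}k)!$ are forced by the commutation identities between $\nabla$ and $V_0$ in $\Dc(TM)$. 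Subtracting this family leaves $\Psi_3=0$, and re-running the same procedure anchored in the new top grade $2$ yields a second one-parameter family $\beta\,\zeta_2=\beta(\nabla A_1+2A_2)$. Subtracting it too reduces the problem to grades $\le1$: the remaining $\Psi_0$ is then an $\Hh_0$-cocycle whose $\nabla^3$ direction has been absorbed in $\zeta_1$, so by Theorem \ref{thm2} it is a multiple of $\z_1=tA_{0,t}+zA_{0,z}-2A_0$, and the residual $\Psi_1$ (with $\Psi_2=0$) is, via $(1,1)$ together with the same commutant argument used for $\Psi_3$, a multiple of the intertwiner $\z_2(A_1)=A_1$.

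Independence modulo coboundaries is immediate: a coboundary $V(\cdot)\lrcorner dF$ has $\Psi_k=0$ for $k\ge2$ (since $V_k=0$ there), so it cannot modify the grade-$3$ term $A_3$ of $\zeta_1$ nor the grade-$2$ term $2A_2$ of $\zeta_2$, separating $\{\zeta_1,\zeta_2\}$ from $\{\z_1,\z_2\}$, and the remaining two-dimensional space in grades $\le1$ is classified as above. The main obstacle will be the bookkeeping in the descending cascade: after $\alpha$ is pinned by the top grade, one must check simultaneous compatibility of the triangular system of PDEs for the coefficient functions $\lambda_{ij}$ of each $\Psi_k$ and show that all residual freedom can be absorbed by a single coboundary $V(\cdot)\lrcorner dF$ with $F\in C^\infty(TM)$ without spoiling the earlier normalisations; the exact rational coefficients $\tfrac16,\tfrac12,1$ in $\zeta_1$ and $1,2$ in $\zeta_2$ only emerge once the interplay of powers of $\nabla$ with $V_0$ and $V_1$ is tracked through all intermediate grades.
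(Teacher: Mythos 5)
Your proposal is correct and follows essentially the same route as the paper, which itself only says that the statement follows from the vanishing of the cocycle in gradings $>3$ (via Proposition \ref{Proposition3}) plus ``the same computations as in Theorems \ref{thm1} and \ref{thm2}''. Your grade-by-grade descent from the top grading, the surjectivity of the Poisson bracket, the one-dimensional commutant argument, and the observation that coboundaries vanish in gradings $\ge 2$ are a faithful (and more explicit) organization of exactly that coefficient-matching computation.
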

The required extension is given by the formula ($c_i =\mathrm{const}$):
 $$
\hat{V} \colon \sum\limits_{k=0}^{\infty}\e^k A_k\mapsto
\hat{V}_0(A_0) + \hat{V}_1(A_1)+ \hat{V}_2(A_2)+ \hat{V}_3(A_3),
 $$
 \begin{eqnarray*}
\hat{V}_0(A_0) &=& V_0(A_0) +
\Bigl(\tfrac16c_1\nabla^3(A_0) + c_3 (t A_{0,t} + z A_{0,z} - 2 A_{0})\Bigr)\,\p_u \\
\hat{V}_1(A_1) &=& V_1(A_1) + \Bigl(\tfrac12\,c_1\nabla^2(A_1) +c_2\nabla(A_1) +c_4 A_1\Bigr)\,\p_u \\
\hat{V}_2(A_2) &=& \bigl(c_1 \nabla(A_2)+2\,c_2 A_2\bigr)\,\p_u \\
\hat{V}_3(A_3) &=& c_1 A_3\,\p_u.
 \end{eqnarray*}

Now we shall classify the family $\frak{h}_{(c_1,c_2,c_3,c_4)} = \hat{V}(\Hh_\infty)$
up to an isomorphism, preserving the filtration
$\oplus_{i\ge k}\Pp_i=\Pp\ot\e^k\R[[\e]]\subset\Hh_\infty$.

 \begin{theorem}\label{thm4}
1) When $c_1 \not =0$, there exists an isomorphism
 $$
\Phi_1 \colon  \frak{h}_{(c_1,c_2,c_3,c_4)} \rightarrow \frak{g}_1 = \frak{h}_{(1,0,0,0)}
 $$
defined as a superposition of the map (the function $\z_1$ below
is the same as in Theorems \ref{thm3} and \ref{thm3!})
 \begin{eqnarray*}
\hat{V}_0(A_0) &\mapsto& \hat{V}_0(A_0)- \frac{c_3}{c_1}\,\hat{V}_3\circ\z_1(A_{0}), \\
\hat{V}_1(A_1) &\mapsto& \hat{V}_1(A_1)-\frac{c_1c_4-2c_2^2}{3c_1}\,\hat{V}_3(A_1)
  -\frac{c_2}{c_1}\,\hat{V}_2(A_1), \\
\hat{V}_2(A_2) &\mapsto& \hat{V}_2(A_2)- \frac{2c_2}{c_1}\,\hat{V}_3(A_2), \\
\hat{V}_3(A_3) &\mapsto& \hat{V}_3(A_3).
 \end{eqnarray*}
and the scaling $u\mapsto -c_1u$, $A_2\mapsto -A_2$, $A_3\mapsto -A_3$.

\medskip

2) If $c_1=0$, $c_2\not=0$, then $\hat{V}_3(A_3)=0$ and the map
 \begin{eqnarray*}
\hat{V}_0(A_0) &\mapsto& \hat{V}_0(A_0) - \frac{c_3}{2 c_2}\,\hat{V}_2\circ\z_1(A_{0}), \\
\hat{V}_1(A_1) &\mapsto& \hat{V}_1(A_1) - \frac{c_4}{2 c_2}\,\hat{V}_2(A_1), \\
\hat{V}_2(A_2) &\mapsto& \hat{V}_2(A_2)
 \end{eqnarray*}
together with the scaling $u\mapsto c_2u$ defines an isomorphism
 $$
\Phi_2 \colon  \frak{h}_{(0,c_2,c_3,c_4)} \rightarrow \frak{g}_2 = \frak{h}_{(0,1,0,0)}.
 $$

\smallskip

3) In the case of $c_1=c_2=0$ we have $\hat{V}_2(A_2)=\hat{V}_3(A_3)=0$. The Lie algebra
$\frak{g}_3=\frak{h}_{(0,0,c_3,c_4)}$ is defined up scaling of $(c_3,c_4)$,
and it coincides with the extension from Theorem \ref{thm3}.
 \end{theorem}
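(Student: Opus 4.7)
The plan is to exploit the limited freedom available in filtration-preserving linear isomorphisms of the extensions $\frak{h}_{(c_1,c_2,c_3,c_4)}$. Since preserving the filtration $\oplus_{i\ge k}\Pp_i$ forces each $\hat{V}_i$ to map to itself modulo $\oplus_{j>i}\hat{V}_j(\Pp_j)$, the only allowable modifications are (i) adding to $\hat{V}_i(A_i)$ terms of the form $\hat{V}_j(\omega_{j,i}(A_i))$ with $j>i$, where each $\omega_{j,i}\colon\Pp_i\to\Pp_j$ is a linear operator compatible with the brackets, together with (ii) a rescaling of the fiber coordinate $u$. Compatibility with the Lie brackets forces the $\omega_{j,i}$ to be scalars or multiples of the cocycle $\z_1$ from Theorem \ref{thm1}; no other operators of this type survive the cocycle identity.

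For Case 1 ($c_1\ne0$), the top element $\hat{V}_3(A_3)=c_1A_3\,\p_u$ is non-degenerate on $\Pp$, so it acts as a universal eraser. Subtracting $(c_3/c_1)\,\hat{V}_3\circ\z_1$ from $\hat{V}_0(A_0)$ precisely cancels the $c_3\z_1(A_0)\p_u$ contribution; analogous subtractions of appropriate multiples of $\hat{V}_2(A_1)$ and $\hat{V}_3(A_1)$ from $\hat{V}_1(A_1)$ kill the $c_2\nabla(A_1)$ and $c_4A_1$ terms, while a multiple of $\hat{V}_3(A_2)$ subtracted from $\hat{V}_2(A_2)$ kills the $2c_2A_2$ term. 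The precise coefficients in $\Phi_1$ are forced by these cancellations. The scaling $u\mapsto-c_1u$ together with the signs $A_2\mapsto-A_2$, $A_3\mapsto-A_3$ normalizes the remaining leading coefficient to match $\frak{g}_1=\frak{h}_{(1,0,0,0)}$.

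Cases 2 and 3 proceed by the same mechanism but with fewer available erasers. When $c_1=0$ but $c_2\ne0$, the element $\hat{V}_3$ vanishes identically while $\hat{V}_2(A_2)=2c_2A_2\,\p_u$ remains non-degenerate, so the same idea kills the $c_3,c_4$ contributions in $\hat{V}_0,\hat{V}_1$, giving $\Phi_2$; the scaling $u\mapsto c_2u$ then normalizes $c_2$. When $c_1=c_2=0$, both $\hat{V}_2$ and $\hat{V}_3$ vanish, so no further normalization of $(c_3,c_4)$ is possible beyond the overall scaling of $u$, and the extension reduces to the one from Theorem \ref{thm3} parametrized by $(c_3:c_4)\in\mathbb{P}^1$.

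The main obstacle is verifying that each $\Phi_i$ is a genuine Lie algebra homomorphism, not merely a filtration-preserving bijection; one must check that $[\Phi_i\hat{V}(A),\Phi_i\hat{V}(B)]=\Phi_i[\hat{V}(A),\hat{V}(B)]$. This reduces to two ingredients: first, $\z_1$ being a $1$-cocycle on $\Pp$ (Theorem \ref{thm1}) guarantees $\z_1\{A,B\}=\{\z_1(A),B\}+\{A,\z_1(B)\}$, so compositions such as $\hat{V}_3\circ\z_1$ behave correctly under brackets; second, the elements $\hat{V}_2(A),\hat{V}_3(A)$ are purely in the $\p_u$ direction with coefficients that are pull-backs from $M$, hence they commute pairwise and satisfy the clean relation $[\hat{V}_0(A),\hat{V}_j(B)]=\hat{V}_j(\{A,B\})$ for $j\ge1$. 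Granted these two facts, the remainder is a direct but lengthy verification of the explicit formulas for $\Phi_1$ and $\Phi_2$.
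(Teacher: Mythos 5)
The paper offers no written proof of this theorem --- it is presented as the outcome of a direct computation --- and your strategy (treat $\hat V_2,\hat V_3$ as ``erasers'' in the $\p_u$-direction, cancel the $c_3,c_4,c_2$-contributions term by term, then normalize by rescaling $u$) is exactly the natural direct verification, so in spirit you are doing the same thing the authors did. However, the two ``ingredients'' you claim reduce the homomorphism check to are not sufficient, and one of them is misstated. First, the coefficients of $\hat V_2(A_2)=(c_1\nabla(A_2)+2c_2A_2)\,\p_u$ are \emph{not} pull-backs from $M$: $\nabla(A_2)=xA_{2,t}+yA_{2,z}$ depends on the fiber coordinates. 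The pairwise commutativity you want survives anyway, but only because all of $\hat V_2,\hat V_3$ are $u$-independent multiples of $\p_u$, not for the reason you give. Second, and more importantly, the relation $[\hat V_0(A),\hat V_j(B)]=\hat V_j(\{A,B\})$ together with the cocycle identity for $\z_1$ only controls brackets against the grading-zero part. The actual content of verifying that $\Phi_1$ is a homomorphism sits in the positive gradings: one needs $[\hat V_1(A),\hat V_1(B)]=\hat V_2(\{A,B\})$ and $[\hat V_1(A),\hat V_2(B)]=\hat V_3(\{A,B\})$, which rest on the identities $V_1(A)\nabla^kB-V_1(B)\nabla^kA=k\,\nabla^{k-1}\{A,B\}$ that you never invoke. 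These are precisely the relations that force the coefficient $-2c_2/c_1$ in the third line of $\Phi_1$: a correction $\hat V_1(A)\mapsto\hat V_1(A)-\tfrac{c_2}{c_1}\hat V_2(A)$ produces, upon bracketing two such elements, the term $-\tfrac{2c_2}{c_1}\hat V_3(\{A,B\})$, which must be matched by the correction to $\hat V_2$. Without this piece your argument does not explain why the displayed coefficients are mutually consistent, which is the only nontrivial point of the proof.

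A smaller remark: your opening claim that bracket-compatibility forces every $\omega_{j,i}$ to be a scalar or a multiple of $\z_1$ is itself a cohomological computation (essentially Theorems \ref{thm1}--\ref{thm3!} again) that you assert without justification; it is also unnecessary for the existence statement being proved, since the theorem exhibits explicit maps rather than classifying all filtration-preserving isomorphisms. I would drop it and instead spell out the grading-one and grading-two bracket computations indicated above.
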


Let us denote the generators $\hat{V}_i$ in case {\it1)\/} by $W_i$,
 \begin{eqnarray*}
W_0(A_0) = V_0(A_0) - \tfrac{1}{6}\,\nabla^3(A_0)\,\p_u, &&
W_2(A_2) = \nabla(A_2)\,\p_u, \\
W_1(A_1) = V_1(A_1) -\tfrac{1}{2}\,\nabla^2(A_1)\,\p_u, &&
W_3(A_3) = A_3\,\p_u.
 \end{eqnarray*}

The Lie algebra $\frak{g}_1= \frak{a}_0\oplus \frak{a}_1\oplus \frak{a}_2\oplus \frak{a}_3$ is
4-graded, $[\frak{a}_i,\frak{a}_j] =\frak{a}_{i+j}$.
Here $\frak{a}_i= \{W_i(A_i) \,|\, A_i\in C^\infty(M)\}$, and $\frak{a}_i=0$ for $i\not\in\{0,1,2,3\}$.
In addition, $\Phi_1$ is a graded Lie algebra homomorphism.

Similarly, the Lie algebra $\frak{g}_2= \frak{a}_0\oplus \frak{a}_1\oplus \frak{a}_2$ is
3-graded, and $\Phi_2$ is a graded Lie algebra homomorphism. 

Finally, the Lie algebra
$\frak{g}_3= \frak{a}_0\oplus \frak{a}_1$ is 2-graded.

\section{Differential invariants of the action}\label{S5}

Let $\frak{G}_1$, $\frak{G}_2$, $\frak{G}_3$ be the Lie pseudo-groups on $J^0=J^0(TM,\R)$
with the Lie algebras $\frak{g}_1$, $\frak{g}_2$ and $\frak{g}_3$, respectively.
By direct computations, using \textsc{Maple}, we find differential invariants of the prolongations
of actions of these pseudo-groups on $J^2(TM,\R)$:

 \begin{theorem}\label{thm5}
The only differential invariants of the action on $J^2(TM,\R)$ are:
 \begin{itemize}
\item[{\it1)}] $\frak{G}_1$: \ $I_1=u_{ty}-u_{xz}+u_{xx}u_{yy}-u_{xy}^2$.
 \medskip
\item[{\it2)}] $\frak{G}_2$: \ $I_2= (u_y-x)^2 u_{xx} - 2\,(u_x+y)(u_y-x)\,u_{xy} +(u_x+y)^2 u_{yy}$ and
  $I_3= u_{xx}u_{yy} - u_{xy}^2$.
 \medskip
\item[{\it3)}] $\frak{G}_3$: \ the above $I_3$ and\/ $I_4 = u_y^2 u_{xx} - 2\,u_xu_y\,u_{xy} +u_x^2 u_{yy}$.
 \end{itemize}
 \end{theorem}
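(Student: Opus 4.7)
A function $I\in C^\infty(J^2(TM,\R))$ is an invariant of a pseudo-group iff its $J^2$-prolongation is annihilated by every generator. My plan is to exploit the graded structure $\frak{g}_i=\bigoplus \frak{a}_k$ by processing the highest-grade (purely vertical) generators first, since these give the strongest constraints and reduce the problem to a small residual PDE system handled by the lower grades.

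\textit{Case $\frak{G}_1$.} Prolong each $W_k(A_k)$ to $J^2$ via $\phi^J=D_J(\phi-\xi^i u_i)+\xi^i u_{J,i}$. With $A_3(t,z)$ arbitrary, $W_3(A_3)=A_3\p_u$ forces $I$ to be independent of $u$ and of all pure-$(t,z)$ jets $u_t,u_z,u_{tt},u_{tz},u_{zz}$. With $A_2(t,z)$ arbitrary, $W_2(A_2)=(xA_{2,t}+yA_{2,z})\p_u$ eliminates $u_x,u_y,u_{tx},u_{zy}$ and forces $u_{ty},u_{xz}$ to appear only as the combination $J:=u_{ty}-u_{xz}$. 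Setting $A_1=t,z$ in $W_1(A_1)$ eliminates $\partial_x,\partial_y$, while setting $A_1=t^2/2,z^2/2,tz$ brings in the cocycle $-\tfrac12\nabla^2(A_1)\p_u$ at the second order and yields the overdetermined system
\[
\partial_{u_{xx}}I=u_{yy}\,\partial_J I,\qquad \partial_{u_{yy}}I=u_{xx}\,\partial_J I,\qquad \partial_{u_{xy}}I=-2u_{xy}\,\partial_J I,
\]
whose general solution is $I=\Phi(I_1;t,z)$ with $I_1=J+u_{xx}u_{yy}-u_{xy}^2$. The linear choices $A_0=t,z$ in $W_0(A_0)$ kill the residual $t,z$-dependence, and a direct verification --- in which the cocycle $-\tfrac16\nabla^3(A_0)\p_u$ cancels the higher-order contributions coming from the horizontal field $V_0(A_0)$ --- confirms that $I_1$ itself is $\frak{g}_1$-invariant.

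\textit{Cases $\frak{G}_2$ and $\frak{G}_3$.} The same grade-by-grade scheme applies, but the absence of $W_3$ (respectively $W_2,W_3$) leaves more functional freedom. For $\frak{G}_2$ the top vertical generator is $\hat V_2(A_2)=2A_2\p_u$, which no longer eliminates all pure-$(t,z)$ $u$-jets; combined with $\hat V_1(A_1)=V_1(A_1)+\nabla(A_1)\p_u$ and $\hat V_0=V_0$, the elimination yields a 2-parameter family of invariants with basis $\{I_2,I_3\}$. For $\frak{G}_3$ only $\hat V_0$ and $\hat V_1$ survive (with cocycles $c_3(tA_{0,t}+zA_{0,z}-2A_0)\p_u$ and $c_4A_1\p_u$), and the resulting system integrates to $I_3,I_4$. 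In both cases functional independence is confirmed by computing the Jacobian rank at a generic point of $J^2$.

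The main obstacle is the sheer volume of bookkeeping. Each $\hat V_i(A_i)$ is parametrized by an arbitrary function $A_i(t,z)$, so the invariance condition is an infinite family of linear PDEs on $I$; Taylor-expanding $A_i$ at a generic base point reduces this to a finite overdetermined system at each order of the expansion. The graded structure makes the successive elimination tractable, but verifying the cancellations at the $W_0$ (respectively $\hat V_0$) level --- where horizontal diffeomorphisms of $M$ interact with the Gelfand--Fuks-type cocycle $\nabla^3$ --- is the delicate step that the authors relegate to computer algebra.
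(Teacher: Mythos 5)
Your plan is essentially the paper's own proof: the authors establish Theorem~\ref{thm5} by direct computation of the infinitesimal invariance conditions (relegated to \textsc{Maple}), and your grade-by-grade elimination is just that computation organized by hand. The part you actually carry out checks out: the prolongations of $W_3$, $W_2$, and of $W_1(A_1)$ for $A_1\in\{t,z,t^2/2,z^2/2,tz\}$ give exactly the reductions and the three first-order PDEs you list, whose solution space is $\Phi(I_1;t,z)$; the remaining sufficiency check (invariance of $I_1$ under $W_0(A_0)$ for general $A_0$, where $-\tfrac16\nabla^3(A_0)$ cancels the contributions of $V_0(A_0)$) is the same black box the paper hands to computer algebra. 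One slip worth correcting in the $\frak{G}_2$ case: $\hat V_2(A_2)=2A_2\,\p_u$ \emph{does} eliminate $u$ and all pure-$(t,z)$ jets (it plays exactly the role of the former $W_3$); what is lost relative to $\frak{g}_1$ is the generator of the form $\nabla(A_2)\,\p_u$, and it is its absence that lets $u_x,u_y,x,y$ survive into $I_2$. This does not affect the conclusion, but the stated reason for the extra functional freedom is backwards.
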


The equation $I_1= 0$ is the second Pleba\~{n}ski equation (\ref{PbII}).
The equations $I_2=0$, $I_3=0$, $I_4=0$ have only two independent variables and are less interesting.
Moreover, they can be linearized by contact transformations.
For the Monge-Amp\`ere equation $I_3=0$ this result is classical. For two other equations it can be
proven by the methods of \cite{M} or \cite{KLR}:

 \begin{theorem}\label{thm6}
Equations
\[
(u_y-x)^2 u_{xx} - 2\,(u_x+y)(u_y-x)\,u_{xy} +(u_x+y)^2 u_{yy} = 0
\]
and
\[
u_y^2 u_{xx} - 2\,u_x u_y u_{xy} +u_x^2 u_{yy} = 0
\]
are contact-equivalent to the equation
\[
\widetilde{u}_{\widetilde{x}\widetilde{x}} = 0.
\]
 \end{theorem}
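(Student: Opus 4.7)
The plan is to exhibit, for each of the two equations, an explicit contact transformation sending it to $\tilde{u}_{\tilde{x}\tilde{x}}=0$. Both equations are parabolic Monge--Amp\`ere (the symbol discriminant $B^{2}-AC$ vanishes identically) with a single characteristic direction, so by the Lychagin/Kushner--Lychagin--Rubtsov scheme of \cite{KLR} (or the equivalent symmetry-based method of \cite{M}) the linearisation is governed by Frobenius integrability of the characteristic distribution on the equation manifold, which I would verify directly.

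For $I_{4}=0$ the characteristic direction is $(u_{y},-u_{x})$, tangent to the level curves of $u$; geometrically the equation asserts that those level curves are straight lines. I would apply the hodograph $(x,y,u,p,q)\mapsto(u,y,x,\,1/p,\,-q/p)$, which is contact because $d\tilde{u}-\tilde{p}\,d\tilde{x}-\tilde{q}\,d\tilde{y}=-p^{-1}(du-p\,dx-q\,dy)$. A direct prolongation to $J^{2}$ then gives $\tilde{u}_{\tilde{y}\tilde{y}}=-I_{4}/u_{x}^{3}$, so $I_{4}=0$ transforms into $\tilde{u}_{\tilde{y}\tilde{y}}=0$; a final swap $\tilde{x}\leftrightarrow\tilde{y}$ finishes this case. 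This step is essentially routine.

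For $I_{2}=0$ the characteristic direction is the twisted gradient $(u_{y}-x,\,-(u_{x}+y))$, i.e.\ the canonical momenta have been replaced by $P=p+y$, $Q=q-x$. The main obstacle is that this shift cannot be absorbed by any point transformation, since the $1$-form $y\,dx-x\,dy$ is not closed ($d(y\,dx-x\,dy)=-2\,dx\wedge dy$): no function $\phi(x,y,u)$ satisfies $\phi_{x}=y$, $\phi_{y}=-x$. A genuine non-point contact transformation mixing $u$ with $p$ and $q$ is therefore necessary. I would construct it via the method of \cite{KLR,M}: compute the characteristic distribution of $I_{2}=0$ on $J^{1}$, check its Frobenius integrability, take two functionally independent first integrals of the characteristic system as the new independent coordinates $\tilde{x},\tilde{y}$, and complete them to a contact transformation by a compatible choice of $\tilde{u}$; verifying that the image equation is $\tilde{u}_{\tilde{x}\tilde{x}}=0$ is then a direct jet-level computation, best automated with a CAS as in Theorem~\ref{thm5}. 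The hardest step is producing these jet-invariant first integrals of the characteristic system of $I_{2}$ explicitly; once they are in hand the remainder is mechanical.
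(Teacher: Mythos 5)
Your proposal is correct and follows precisely the route the paper itself indicates (the paper only cites the methods of \cite{M} and \cite{KLR} without giving details): both equations are parabolic Monge--Amp\`ere, and your explicit partial hodograph $(x,y,u,p,q)\mapsto(u,y,x,1/p,-q/p)$ settles the second equation, since a direct prolongation indeed gives $\tilde u_{\tilde y\tilde y}=-I_4/u_x^3$ as you state. For the first equation you stop, as the paper does, at invoking the characteristic-distribution criterion, but the step you defer does close: writing $P=u_x+y$, $Q=u_y-x$, the effective $2$-form of $I_2$ factors as $(Q\,dp-P\,dq)\wedge(P\,dx+Q\,dy)$, so the characteristic distribution is spanned by $X=Q(\partial_x+p\,\partial_u)-P(\partial_y+q\,\partial_u)$ and $Y=P\,\partial_p+Q\,\partial_q$, and the relation $[X,Y]=-X-Y$ shows it is Frobenius-integrable, which is exactly the \cite{KLR} condition for contact equivalence to $\tilde u_{\tilde x\tilde x}=0$.
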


The only differential invariant of the prolongation of action of
the Lie pseudogroup $\frak{G}_1$ on $J^3$ is the function
 \[
J_1=E_2E_4 - E_1E_3 - u_{xx}E_3^2+2\,u_{xy}E_2E_3 -u_{yy}E_2^2,
 \]
where $E_1=\D_t(I_1)$, $E_2=\D_x(I_1)$, $E_3=\D_y(I_1)$, $E_4=\D_z(I_1)$.
The invariant derivations of the prolongation of $\frak{G}_1$ on jets of order greater than 2 are
 \begin{eqnarray*}
\mathbb{D}_1 &=& E_3 \D_x - E_2 \D_y, \\
\mathbb{D}_2 &=& E_3 \D_t + E_4 \D_x - E_1 \D_y - E_2 \D_z, \\
\mathbb{D}_3 &=& (u_{yy}E_2-u_{xy}E_3-E_4) \,\D_x - (u_{xy}E_2-u_{xx}E_3-E_1)\, \D_y, \\
\mathbb{D}_4 &=& (u_{yy}E_2-u_{xy}E_3-E_4)\, \D_t \\
  &&- (u_{yy}E_1+(u_{xx}u_{yy}-u_{xy}^2)E_3-u_{xy}E_4)\, \D_x \\
  && + (u_{xy}E_1+(u_{xx}u_{yy}-u_{xy}^2)E_2-u_{xx}E_4)\, \D_y \\
  && -(u_{xy} E_2-u_{xx}E_3-E_1)\, \D_z.
 \end{eqnarray*}
We have $\mathbb{D}_1(I_1)=\mathbb{D}_2(I_1)=\mathbb{D}_4(I_1)=0$ and $\mathbb{D}_3(I_1)=-J_1$.
Let $c_{ij}^k$ be the structural functions of the frame $\mathbb{D}_1$, \dots , $\mathbb{D}_4$,
 \[
[\mathbb{D}_i, \mathbb{D}_j]  = c_{ij}^k\,\mathbb{D}_k,
\qquad 1\le i < j \le 4.
 \]
Then $c_{ij}^k$ are rational functions of $J_1$ and the fucntions $K_1$, \dots , $K_{11}$ defined as
$K_1 = c_{12}^1$, $K_2 = c_{12}^2$, $K_3 = c_{12}^4$,
$K_4 = c_{13}^4$, $K_5 = c_{14}^3$, $K_6 = c_{23}^4$,
$K_7 = c_{24}^1$, $K_8 = c_{24}^2$, $K_9 = c_{24}^3$,
$K_{10} = c_{24}^4$ and $K_{11} = c_{34}^1$. These together with $I_1,J_1$ and $\mathbb{D}_i(J_1)$
form a basis of differential invariants on 4-jets.

As our Maple computations indicate, the whole algebra of scalar (absolute) differential invariants
of $\frak{G}_1$ is generated by the fundamental invariant $I_1$
and the invariant derivations $\mathbb{D}_i$, i.e. the iterated invariant derivatives of
$I_1$ and functions of them (in particular all $K_j$ are obtained so) yield all the invariants.

However the generators $\mathbb{D}_1$, \dots , $\mathbb{D}_4$ vanish on equation (\ref{PbII}), and
thus $\{I_1=0\}$ is a singular manifold of the action of $\frak{G}_1$ on $J^\infty(TM,\R)$.

\section{Symmetries of Pleba\'{n}ski II}\label{S6}

As we have shown, the second Pleba\'{n}ski equation arises naturally from the Lie algebra
$\frak{g}_1$ (which in turn is a natural extension of $\op{SDiff}(2)$). On the other hand,
this algebra appears to be an infinite-dimensional part of the algebra of contact symmetries
of (\ref{PbII}) ---  the following statement is obtained by a direct computation, cf. \cite{MNS}.

 \begin{theorem}\label{thm7}
The Lie algebra of contact symmetries of equation (\ref{PbII}) is the graded Lie algebra
$\tilde{\frak{g}}_1= \tilde{\frak{a}}_0\oplus \tilde{\frak{a}}_1\oplus \tilde{\frak{a}}_2\oplus \tilde{\frak{a}}_3$, $[\tilde{\frak{a}}_i,\tilde{\frak{a}}_j]\subset \tilde{\frak{a}}_{i+j}$ with
$\tilde{\frak{a}}_0 =\frak{a}_0 \oplus \R\cdot W_0'\oplus\R\cdot W_0''$,
$\tilde{\frak{a}}_1 =\frak{a}_1 \oplus \R\cdot W_1'$,
$\tilde{\frak{a}}_2 =\frak{a}_2$, $\tilde{\frak{a}}_3 =\frak{a}_3$
{\rm(}$\tilde{\frak{a}}_i = 0$ for $i\not\in\{0,1,2,3\}${\rm)}. Here
 \begin{eqnarray*}
W_0' &=& t\,\p_t+x\,\p_x+y\,\p_y+z\,\p_z+2\,u\,\p_u,\\
W_0''&=& -x\,\p_x-y\,\p_y-3\,u\,\p_u, \qquad\qquad\quad
W_1'  =  t\,\p_x+z\,\p_y.
 \end{eqnarray*}
The structure equations of $\tilde{\frak{g}}_1$ are the following
(the functions $\z_1,\z_2$ are the same as in Theorem \ref{thm3}):
 \begin{gather*}
[W_i(A_i),W_j(A_j)]=W_{i+j}(\{A_i,A_j\}),\\
[W_0',W_i(A_i)]=W_i(\z_1(A_i)),\quad [W_0'',W_i(A_i)]=i\,W_i(A_i),\\
[W_1',W_i(A_i)]=W_{i+1}((\z_1+i\,\z_2)(A_i)),\\
[W_0',W_0'']=0,\ [W_0',W_1']=0,\ [W_0'',W_1']=W_1'.
 \end{gather*}
 \end{theorem}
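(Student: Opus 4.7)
The plan is standard for classifying contact symmetries of a second-order PDE. Writing $F=u_{ty}-u_{xz}+u_{xx}u_{yy}-u_{xy}^2$, a contact field on $J^1(\R^4,\R)$ with generating function $\vp=\vp(t,x,y,z,u,u_t,u_x,u_y,u_z)$ is a symmetry of $\{F=0\}$ precisely when the linearization
\[
\D_t\D_y(\vp)-\D_x\D_z(\vp)+u_{yy}\D_x^2(\vp)+u_{xx}\D_y^2(\vp)-2u_{xy}\D_x\D_y(\vp)\equiv 0\pmod F
\]
holds, with the $\D_i$ restricted to the contact prolongation. Substituting $u_{xz}$ from $F=0$ and regarding the remaining second-order jets $u_{tt},u_{tx},u_{ty},u_{tz},u_{xx},u_{xy},u_{yy},u_{yz},u_{zz}$ as independent variables converts the condition into a polynomial identity whose coefficients form an overdetermined linear PDE system for $\vp$.

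First I would extract from the coefficients of monomials of degree $\ge 2$ in the second-order jets the constraint that $\vp$ is of low polynomial degree in the first-order variables $u_t,u_x,u_y,u_z$; the degree-one coefficients then reduce the problem to a finite linear PDE system for a finite collection of unknown functions of $(t,x,y,z,u)$. Integrating this residual system, the main technical but purely mechanical step carried out with Maple as elsewhere in the paper, produces a general contact symmetry depending on four arbitrary functions $A_0(t,z),A_1(t,z),A_2(t,z),A_3(t,z)$ together with three additional integration constants. The four functional freedoms are identified with $W_0(A_0),W_1(A_1),W_2(A_2),W_3(A_3)$ from Theorem \ref{thm4}: these are a priori known to be symmetries since $I_1=F$ is their fundamental invariant, and the determining system confirms that no further functional freedom arises. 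The three remaining constants generate the scaling and shift fields $W_0',W_0'',W_1'$, each verified to preserve $\{F=0\}$ directly (for $W_0'$ and $W_0''$ because $F$ is weighted-homogeneous with weights $0$ and $-2$ under the corresponding scalings, and for $W_1'$ by a short computation on the contact prolongation).

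The grading statement then follows by inspection: the $W_i(A_i)$ already lie in $\tilde{\frak{a}}_i$ by construction, while the explicit form of $W_0',W_0'',W_1'$ places them in gradings $0,0,1$ respectively. The bracket relations $[W_i(A_i),W_j(A_j)]=W_{i+j}(\{A_i,A_j\})$ merely restate the graded Poisson property already established for $\frak{g}_1$ in Theorem \ref{thm4}. The remaining brackets reduce to elementary coordinate calculations: $W_0''$ acts as the grading derivation in the vertical coordinates, producing the eigenvalue $i$ on $\tilde{\frak{a}}_i$; $W_0'$ acts by an Euler-type differentiation whose effect on the $(t,z)$-dependent argument $A_i$ is precisely the cocycle $\z_1$; and $[W_1',W_i(A_i)]$ combines a $\z_1$-contribution with a correction $i\,\z_2$ picked up from the $\nabla^k$-factors in the formulas defining $W_i$. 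The main obstacle is the bookkeeping within the determining system rather than any conceptual point, and once the generators are in hand the structure equations reduce to a finite list of direct verifications.
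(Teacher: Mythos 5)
Your proposal is correct and follows essentially the same route as the paper, which establishes Theorem \ref{thm7} by exactly this kind of direct computation (setting up the determining equation for the contact generating function, solving the overdetermined system to find the four functional parameters plus the three extra fields, and then verifying the structure equations by elementary coordinate calculations); your linearization of $F$ and the weighted-homogeneity weights $0$ and $-2$ for $W_0'$ and $W_0''$ are both accurate. Nothing further is needed.
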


Thus the descending central and derived series of $\tilde{\frak{g}}_1$ are:
$[\tilde{\frak{g}}_1,\tilde{\frak{g}}_1] = [\tilde{\frak{g}}_1,\frak{g}_1\oplus\R\cdot W_1'] =
\frak{g}_1\oplus\R\cdot W_1'$,
$[\frak{g}_1\oplus\R\cdot W_1',\frak{g}_1\oplus\R\cdot W_1'] = [\frak{g}_1,\frak{g}_1] = \frak{g}_1$.

Moreover the algebra $\tilde{\frak{g}}_1$ is restored from $\frak{g}_1\simeq\Hh_3$ in two steps.
At first we apply the 2-dimensional right extension by the derivations
$\e^kA_k\mapsto\e^k\z_1(A_k)$, $\e^kA_k\mapsto\e^{k+1}(\z_1+k\,\z_2)(A_k)$ of degrees 0 and 1.
The corresponding cohomology classes in $H^1(\Hh_3,\Hh_3)$ are closely related to the
fundamental cohomology classes from Theorems \ref{thm3} and \ref{thm3!}.

Then we do 1-dimensional extension by the grading element $W_0''$.

It is important to stress that if we stop on the first step, we obtain the full symmetry algebra
$\g\oplus\R\cdot W_0'\oplus \R\cdot W_1'$ of the function $I_1$. 
The remaining field does not preserve $I_1$ -- it is a relative differential invariant for $W_0''$.
Indeed its second prolongation satisfies: $\op{pr}_2(W_0'')(I_1)=-2\,I_1$.

\medskip

\noindent
 {\bf Remark on Lie remarkable property.}
Finite and infinite-dimen\-sional Lie algebras of classical symmetries
are important in integration and establishing exact solutions of differential equations.

On the other hand for any Lie pseudogroup of symmetries we can calculate its prolongation
to the space of $k$-jets and consider non-trivial orbits for the smallest $k$,
which can be considered geometrically as differential equations.
If these two processes are inverse of each other, the equation is called
Lie remarkable (the original paper \cite{MOV} deals with point symmetries, i.e.
fields on the space of 0-jets $\Dc(J^0)$, but it extends to
the contact fields on the space of 1-jets $\mathfrak{cont}(J^1)$).

In the particular case of scalar determined equation (one independent variable and one PDE)
we calculate the symmetry group and (if it is non-trivial) look for the lowest order
differential invariant $I\in C^\infty(J^k)$. If it is unique (up to the gauge $I\mapsto F(I)$),
and $I=0$ coincides with our PDE, the latter has the above property. Thus the Pleba\'{n}ski equation
is Lie remarkable (in general the equation $I=c$ can depend on the value of $c$, and there can be
even regular and non-regular values, but for (\ref{PbII}) this constant can be easily absorbed).

Not all equations are Lie remarkable. For instance, the Boyer-Finley equation $u_{z\bar z}=(e^u)_{tt}$
(another 'heavenly' equation) has 5 differential invariants of order 2 (3 of pure order 2)
of its symmetry groups action \cite{Sh} (in this case the group $G$ is also infinite-dimensional,
and it consists of conformal transformations of $\R^2$ together with a translation and a scaling).
This makes possible application of the method of group foliation, but it does not
uniquely characterize the equation.

\appendix

\section{Pleba\'{n}ski I equation}\label{SA}

Let us briefly discuss the structure of the contact symmetry algebra of the first Pleba\'{n}ski's
heavenly equation \cite{P}
 \begin{equation}\label{PbI}
u_{tx}u_{yz} - u_{tz}u_{xy} =1,
 \end{equation}
which was studied in \cite{BW,MNS}. It turns out that the infinite part of this
symmetry algebra is also composed of 4 copies of $\op{SDiff}(2)$, but now it is 2-graded,
to be more precise it is a copy of two such algebras.

Thus instead of $TM$ we get $M\times M$, with the symplectic form being the product structure.
Let $M_1=\R^2(t,y)$, $M_2=\R^2(x,z)$ be the two copies of $M$ with $\varpi_1=dt\we dy$, $\varpi_2=dx\we dz$.
These generate the Lie sub-algebra
 $$
\mathfrak{b}_0=\{X_{A_0}+X_{B_0}\,|\,A_0\in C^\infty(M_1),B_0\in C^\infty(M_2)\}\subset\Dc(M\times M)
 $$
consisting of two copies of $\op{SDiff}(2)$. Letting $J^0=J^0(M\times M)=M\times M\times\R$, with
the coordinate $u$ on the last factor, we extend the algebra to include two more copies of $\op{SDiff}(2)$:
 $$
\mathfrak{b}_1=\{(A_1+B_1)\,\p_u\,|\,A_1\in C^\infty(M_1),B_1\in C^\infty(M_2)\}\subset\Dc(J^0).
 $$
One can easily check that $\g=\mathfrak{b}_0\oplus\mathfrak{b}_1$ is a graded Lie algebra.

To indicate the grading we will write the generators of the algebra $\g$ as
$Y_0^\a(A_0)=X_{A_0}$, $Y_0^\b(B_0)=X_{B_0}$, $Y_1^\a(A_1)=A_1\,\p_u$, $Y_1^\b(B_1)=B_1\,\p_u$.

\begin{theorem}\label{thm8}
The Lie algebra $\tilde\g$ of contact symmetries of equation (\ref{PbI}) is equal to
$\g\oplus\R^3\langle Y_0',Y_0'',\tilde{Y}_0\rangle$, where
 $$
Y_0'  = t\,\p_t-y\,\p_y,\quad
Y_0'' = x\,\p_x-z\,\p_z,\quad
\tilde{Y}_0= t\,\p_t+y\,\p_y-x\,\p_x-z\,\p_z.
 $$
It is graded by
$\tilde{\mathfrak{b}}_0=\mathfrak{b}_0\oplus\R^3\langle Y_0',Y_0'',\tilde{Y}_0\rangle$,
$\tilde{\mathfrak{b}}_1=\mathfrak{b}_1$. The structure equations of $\tilde\g$ are:
 \begin{gather*}
[Y_i^\a(A_i),Y_j^\a(\bar{A}_j)]=Y_{i+j}^\a(\{A_i,\bar{A}_j\}),\quad
  [Y_i^\a(A_i),Y_j^\b(B_j)]=0, \\
[Y_i^\b(B_i),Y_j^\b(\bar{B}_j)]=Y_{i+j}^\b(\{B_i,\bar{B}_j\}),\\
[Y_i^\a(A_i),Y_0']=Y_i^\a(\mu^a_-(A_i)),\quad [Y_i^\a(A_i),Y_0'']=0,\\
  [Y_i^\a(A_i),\tilde{Y}_i]=Y_i^\a((2-2i)A_i-\mu^\a_+(A_i)),\\
[Y_i^\b(B_i),Y_0']=0,\quad [Y_i^\b(B_i),Y_0'']=Y_i^\b(\mu^\b_-(B_i)),\\
  [Y_i^\b(B_i),\tilde{Y}_i]=Y_i^\b(\mu^\b_+(B_i)-(2-2i)B_i),
 \end{gather*}
where $\mu^\a_\pm(A)=yA_y\pm tA_t$, $\mu^\b_\pm(B)=zB_z\pm xB_x$.
 \end{theorem}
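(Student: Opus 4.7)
The plan is to compute the algebra of classical contact symmetries of (\ref{PbI}) by the standard determining-equation method, and then identify it with the algebraic structure claimed.

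First, I would set up the problem. A contact vector field on $J^1(\R^4,\R)$ is encoded by a characteristic $f(t,x,y,z,u,u_t,u_x,u_y,u_z)$, and the symmetry condition reads $\op{pr}^{(2)}(X_f)(F)\equiv 0$ on $\{F=0\}$, where $F=u_{tx}u_{yz}-u_{tz}u_{xy}-1$. Since $F$ involves only the four mixed second derivatives and is independent of $u$ and of the first derivatives, expansion of this condition gives a linear overdetermined PDE system for $f$ with polynomial coefficients, of the same kind solved for Theorem \ref{thm7}.

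Second, I would solve this system by exploiting the natural $\Z$-grading coming from the weight assignment $(u,u_t,u_x,u_y,u_z)\mapsto\l\,(u,u_t,u_x,u_y,u_z)$, under which $F$ is homogeneous of weight zero. The weight-$1$ component of $f$ must split as $A_1(t,y)+B_1(x,z)$, producing exactly the subalgebra $\mathfrak{b}_1$ of evolutionary fields $(A_1+B_1)\p_u$. The weight-$0$ component decomposes further: pieces $A_0(t,y)$ and $B_0(x,z)$ give the Hamiltonian vector fields $X_{A_0}$ and $X_{B_0}$ relative to $\varpi_1$ and $\varpi_2$ respectively, and hence $\mathfrak{b}_0$; the remaining finite-dimensional complement is governed, among diagonal point dilations $a\,t\p_t+b\,x\p_x+\a\,y\p_y+\b\,z\p_z$, by the single constraint $a+b+\a+\b=0$ extracted from the scaling behavior of each $u_{ij}$ in $F$, producing the three independent generators $Y_0'$, $Y_0''$, $\tilde Y_0$.

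Third, I would verify the claimed structure equations. The brackets $[Y_i^\a(A_i),Y_j^\a(\bar A_j)]$ and the $\b$-analog reduce directly to Poisson brackets on $M_1$ and $M_2$, while the mixed bracket $[Y_i^\a(A_i),Y_j^\b(B_j)]=0$ because the two families act on disjoint symplectic factors and do not interact through the $\p_u$-coefficients (no generator depends on $u$). The adjoint actions of $Y_0',Y_0'',\tilde Y_0$ on $Y_i^\a(A_i)$ and $Y_i^\b(B_i)$ are computed directly from the coordinate expressions and yield precisely the derivations $\mu^\a_\pm,\mu^\b_\pm$ together with the factor $(2-2i)$ that comes from the $u$-weight of a grade-$i$ characteristic. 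The brackets $[Y_0',Y_0'']$, $[Y_0',\tilde Y_0]$ and $[Y_0'',\tilde Y_0]$ are immediate from their coordinate forms.

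The main obstacle is establishing completeness, i.e.\ ruling out any further contact symmetry beyond those listed. I would address this by the standard inductive reduction of the determining system: first eliminate the dependence of $f$ on products of $u_i$'s of order $\ge 2$ using the coefficients of the highest mixed derivatives in $\op{pr}^{(2)}(X_f)(F)$; then split the remaining equations by $u$-degree, which is legitimate since $F$ is $u$-independent; and finally use the product symplectic structure of (\ref{PbI}) to decouple the residual conditions into independent subsystems on $M_1$ and $M_2$, each of which integrates explicitly to the Hamiltonian/scaling pieces above.
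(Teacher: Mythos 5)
Your overall strategy --- write down the determining equations for a contact characteristic $f$ on $J^1$ and integrate them --- is the same as the paper's, which offers no argument for Theorem \ref{thm8} beyond asserting a direct computation (cf.\ the references for Theorem \ref{thm7} and the citations [BW], [MNS]). The verification of the structure equations in your third step is routine and essentially correct. The gap is in your second step and in the completeness argument, which is where all the actual work lies.

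The pivot of your argument is the claim that $F=u_{tx}u_{yz}-u_{tz}u_{xy}-1$ is homogeneous of weight zero under $(u,u_t,u_x,u_y,u_z)\mapsto\l\,(u,u_t,u_x,u_y,u_z)$. It is not: the principal part is quadratic in the second derivatives, hence has weight $2$, while the constant term has weight $0$. Consequently the determining system does not split along this weight (nor simply by $u$-degree: the symmetry condition is $\ell_F(\op{pr}f)\in\langle F\rangle$, and the ideal generated by the inhomogeneous $F$ couples different weights). This is not cosmetic. Carrying out the dilation analysis correctly, for $X=a\,t\p_t+b\,x\p_x+\a\,y\p_y+\b\,z\p_z+c\,u\p_u$ one finds $\op{pr}_2(X)(F)=(2c-a-b-\a-\b)(F+1)$, so the constraint is $a+b+\a+\b=2c$, a four-parameter family; your restriction to $c=0$ with $a+b+\a+\b=0$ silently discards the symmetry $t\p_t+x\p_x+y\p_y+z\p_z+2u\p_u$, which is a mixed-weight field (precisely the kind your splitting cannot see: neither $2u\p_u$ nor $t\p_t+x\p_x+y\p_y+z\p_z$ is separately a symmetry, but their sum is) and is not expressible through the listed generators, since none of them carries a $u\,\p_u$ term. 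At the same time two of the three dilations you do keep, $Y_0'=t\p_t-y\p_y=X_{ty}$ and $Y_0''=x\p_x-z\p_z=X_{xz}$, are Hamiltonian and already lie in $\mathfrak{b}_0$, so the sum you produce is not direct. As written, your completeness step therefore both undercounts and overcounts the finite-dimensional part; it has to be replaced by an honest integration of the determining equations (splitting by polynomial degree in the second- and then first-order jet variables, which \emph{is} preserved by the prolongation), and the result then has to be reconciled with the statement being proved.
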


Both the descending central series and the derived series of $\tilde\g$ stabilize, since
$[\tilde\g,\tilde\g]=[\tilde\g,\g]=[\g,\g]=\g$. The left-hand side of (\ref{PbI}) is an absolute
invariant of the Lie algebra $\tilde\g$, which is a 3-dimensional right extension of $\g$ 
by $\mu^\a_\pm,\mu^\b_\pm$.

\medskip

Higher dimensional versions of the second Pleba\'{n}ski equation are known \cite{PP}.
We can produce some analogs via differential invariants.

For instance, taking 6 copies of $\op{SDiff}(2)$ and applying the above method for
the first Pleba\'{n}ski equation, we obtain, modulo the standard copies of 4-dimensional
equation (\ref{PbI}), the unique 6-dimensional equation on $u=u(x_1,p_1,x_2,p_2,x_3,p_3)$:
 $$
\op{Pf}\begin{bmatrix} 0 & H_{12} & H_{13} \\ -H_{12}^T & 0 & H_{23} \\ -H_{13}^T & -H_{23}^T & 0
\end{bmatrix}=0,
 $$
where $\op{Pf}$ is the Pfaffian,
$H_{ij}=\begin{bmatrix} u_{x_ix_j} & u_{x_ip_j} \\ u_{p_ix_j} & u_{p_ip_j} \end{bmatrix}$
is the $2\times 2$ sub-matrix of $\op{Hess}(u)$ and $H_{ij}^T$ its transpose.

This equation is cubic in 2-jets, and the standard integrability methods are not applicable.
Still it has a huge local symmetry algebra.
The geometry of this equation should be a subject of the further study.


\end{document}